\newcommand{\bb}[1]{\boldsymbol{#1}}
\def\y{\mathbf y}
\def\X{\mathbf X}
\def\x{\mathbf x}
\def\m{\mathbf m}
\def\w{\mathbf w}
\def\V{\mathbf V}
\def\EE{\mathbb{E}}
\def\bb{\boldsymbol{\beta}}
\def\gg{\boldsymbol{\gamma}}
\def\sfN{\textsf{N}}
\def\sfBern{\textsf{Bern}}
\def\sfBeta{\textsf{Beta}}
\def\sfIG{\textsf{IG}}
\def\Var{\text{Var}}
\def\diag{\text{diag}}
\def\e{\mathbf e}
\def\b{\mathbf b}
\newtheorem{thm}{Theorem}[section]
\begin{document}

\title{An Ensemble EM Algorithm for Bayesian Variable Selection}

\author[1]{Jin Wang \thanks{jinwang8@illinois.edu}}
\author[1]{Feng Liang \thanks{liangf@illinois.edu}}
\author[2]{Yuan Ji \thanks{jiyuan@uchicago.edu}}
\affil[1]{ Department of Statistics,
 University of Illinois at Urbana-Champaign}
\affil[2]{Department of Biostatistics, University of Chicago}

\maketitle

\begin{abstract}
We study the Bayesian approach to variable selection in the context of linear regression. Motivated by a recent work by \cite{George:2014}, we propose an EM algorithm that returns the MAP estimate of the set of relevant variables. Due to its particular updating scheme,  our algorithm can be implemented efficiently without 
inverting a large matrix in each iteration and therefore can scale up with big data. We also show that the MAP estimate returned by our EM algorithm achieves variable selection consistency even when $p$ diverges with $n$. In practice, our algorithm could get stuck with local modes, a common problem with EM algorithms. To address this issue, we propose an ensemble EM algorithm, in which we repeatedly apply the EM
algorithm on a subset of the samples with a subset of the covariates, and
then aggregate the variable selection results across those bootstrap
replicates. Empirical studies have demonstrated the superior performance of 
the ensemble EM algorithm. 
\end{abstract}


\section{Introduction}

Consider a simple linear regression model with Gaussian noise:
\begin{equation} \label{linearmodel}
 \y =  \X \bb + \e
 \end{equation}
where $\y=(y_1, \dots, y_n)^T$ is the $n \times 1$ response, $\e=(e_1, \dots, e_n)^T$ is a vector of iid Gaussian random variables with mean $0$ and variance $\sigma^2$, and $\X$ is the $n \times p$ design matrix. The unknown parameters are the regression parameter $\bb = (\beta_1, \dots,
\beta_p)^T$ and the error variance $\sigma^2$.  In many real applications
such as bioinformatics and image analysis, where linear regression models
have been routinely used, the number of potential predictors (i.e., $p$)
is large but only a small fraction of them is believed to be relevant. Therefore the linear model \eqref{linearmodel} is often assumed to be ``sparse" in the sense that most of the coefficients $\beta_j$'s are zero. Estimating the set of relevant variables, $S = \{j : \beta_j \ne 0 \}$,  is an important problem in modern statistical analysis. 

The Bayesian approach to variable selection is conceptually simple and straightforward. First introduce a  $p$-dimensional binary vector $\gg = (\gamma_1, \dots, \gamma_p)^T$ to index all the $2^p$ sub-models, where $\gamma_j=1$ if the $j$th variable is included in this model and $0$ if excluded. Usually $\gamma_j$'s are modeled by independent Bernoulli distributions. Given $\gg$, a popular prior choice for $\bb$ is the ``spike and slab" prior 
\citep{Mitchell:1988}:
\begin{equation} \label{eq:spike:slab}
\pi(\beta_j \mid \gamma_j) =
\begin{cases}
 \delta_0(\beta_j), & \text{if } \gamma_j=0;
\\
g(\beta_j), & \text{if } \gamma_j=1,
\end{cases}
\end{equation}
where $\delta_0(\cdot)$ is the Kronecker delta function corresponding to the density function of a point mass at $0$ and $g$ is  a continuous density function. After specifying priors on all the unknowns, one needs to calculate the posterior distribution. 
Most algorithms for Bayesian variable selection rely on MCMC such as Gibbs or Metropolis Hasting to obtain the posterior distribution; for a review on recent developments in this area, see \citet{o2009review}.  MCMC algorithms, however, are insufficient to meet the growing demand on scalability from real applications. Since the primary goal is variable selection, we focus on efficient algorithms that return the MAP estimate of $\gg$, as an alternative to these MCMC-based sampling methods that return the whole posterior distribution on all the unknown parameters. 

Recently, \citet{George:2014} proposed a simple, elegant EM algorithm for Bayesian variable selection. They  adopted a continuous version of the  ``spike and slab" prior---the spike component in \eqref{eq:spike:slab} is replaced by a normal distribution with a small variance \citep{George:1993}, and proposed an EM algorithm to obtain the MAP estimate of the regression coefficient $\bb$. The MAP estimate $\hat{\bb}_{\text{MAP}}$, however, is not sparse, and an additional thresholding step is needed to estimate $\gg$.

In this paper, we develop an EM algorithm that directly returns the MAP estimate of $\gg$, so no further thresholding is needed. We adopt the same continuous ``spike and slab" prior. Different from the algorithm by \citet{George:2014} that returns $\hat{\bb}_{\text{MAP}}$ by treating $\gg$ as latent, our algorithm returns the MAP estimate of the model index, $\hat{\gg}_{\text{MAP}}$, by treating $\bb$ as latent. The special structure of our EM algorithm allows us to use a computational trick to avoid inverting a big matrix at each iteration, which seems unavoidable in the algorithm by \citet{George:2014}. Further we can show that the $\hat{\gg}_{\text{MAP}}$ achieves asymptotic consistency even when $p$ diverges to infinity with the sample size $n$.


Although shown to achieve selection consistency, in practice, our EM algorithm could get stuck at a local mode due to the large discrete space in which $\gg$ lies. Borrowing the idea of bagging, we propose an ensemble version of our EM algorithm (which we call BBEM): apply the algorithm on multiple Bayesian bootstrap (BB) copies of the data, and then aggregate the variable selection results.  Bayesian bootstrap for variable selection was explored before by \cite{Clyde:2001} for the purpose of prediction, where models built on different bootstrap copies are combined to predict the response. But the focus of our approach is to summarize the evidence for variable relevance from multiple BB copies, which is similar in nature to several frequentist ensemble methods for variable selection, such as the AIC ensemble \citep{Zhu:2006}, stability selection \citep{Meinshausen:2010}, and  random Lasso \citep{Wang:2011}.

The remaining of the paper is organized as follows. Section 2 describes the EM algorithm in detail, Section 3 presents the asymptotic results, and Section 4 describes the BBEM algorithm. Empirical studies are presented in Section 5 and  conclusions and  remarks in Section 6. 


\section{The EM Algorithm}\label{sec:em}
\subsection{Prior Specification}

We adopt the continuous version of ``spike and slab" prior for $\bb$, i.e. a mixture of two normal components with mean zero and different variances: 
\begin{equation} \label{cont:spike:slab}
\pi(\beta_j \mid \sigma, \gamma_j) =
\begin{cases}
\sfN(0,\sigma^2 v_0), & \text{if } \gamma_j=0;
\\
\sfN(0,\sigma^2 v_1), & \text{if } \gamma_j=1,
\end{cases}
\end{equation}
where $v_1 > v_0 > 0.$ Alternatively, we can write the prior on $\bb$ as 
$$ 
\pi(\beta_j \mid \sigma^2, \gamma_j) = \sfN(0,\sigma^2 d_{\gamma_j}),$$
where 
$$ d_{\gamma_j} =  \gamma_j v_1 + (1 - \gamma_j) v_0. $$

For the remaining parameters, we specify independent Bernoulli priors on elements of $\gg$, and conjugate priors like Beta and Inverse Gamma on $\theta$ and $\sigma^2$, respectively: 
\begin{eqnarray*}
\pi(\gg\mid\theta) &=& \sfBern(\theta), \\
\pi(\theta) &=& \sfBeta(a_0,b_0), \\
 \pi(\sigma^2) &=& \sfIG(\nu/2, \nu\lambda/2).
\end{eqnarray*}
For hyper-parameters $(a_0, b_0, \nu, \lambda)$, we suggest the following non-informative choices unless prior knowledge is available: 
\begin{equation} \label{value:hyper:para}
 a_0=b_0=1.1, \quad \nu=\lambda=1. 
\end{equation}
The choice for $v_0$ and $v_1$ will be discussed later. 

\subsection{The Algorithm}\label{sec:alg}
With the Gaussian model and prior distributions specified above, we can write down the full posterior distribution: 
\begin{equation*}
\pi(\gg, \bb, \theta, \sigma^2 \mid \y)\ \propto\  p(\y \mid \bb, \sigma^2) \times \pi(\bb \mid \sigma, \gg) \times \pi(\gg\mid\theta) \times  \pi(\theta) \times \pi(\sigma^2).
\end{equation*}
Treating $\bb$ as the latent variable, we derive an EM algorithm that returns the MAP estimation of parameters $\Theta = (\gg, \sigma^2, \theta)$, whereas  the roles of $\bb$ and $\gg$ are switched in \citet{George:2014}. 

\subsubsection*{E Step}

The objective function $Q$ at the $(t+1)$-th iteration in an EM algorithm is defined as the integrated logarithm of the full posterior with respect to $\bb$ given $\y$ and the parameter values from the previous iteration $\Theta^{(t)} = (\gg^{(t)}, \sigma_{(t)}^2, \theta^{(t)})$, i.e.,
\begin{eqnarray}
Q(\Theta \mid \Theta^{(t)}) &=& \EE_{\bb|\Theta^{(t)}, \y}\log \pi(\Theta, \bb \mid \y) \nonumber\\
&=& - \frac{1}{2 \sigma^2}  \EE_{\bb|\Theta^{(t)}, \y} \Big [  \|\y-\X\bb\|^2 + 
  \sum_{j=1}^p \frac{\beta_j^2}{d_{\gamma_j}}  \Big ] + F(\Theta), \label{Q:ex}  
\end{eqnarray}
where 
\begin{eqnarray*}
F(\Theta) &=& - \frac{n+p}{2} \log \sigma^2 - \frac{1}{2} \sum_{j=1}^p \log d_{\gamma_j} +   \pi(\gg\mid\theta)  \\
&& + \log  \pi(\theta) + \log  \pi(\sigma^2) + \text{Constant}
\end{eqnarray*}
 is a function of $\Theta$ not depending on $\bb$. 

It is easy to show that $\bb$ follows a Normal distribution with mean $\m$ and covariance matrix $\sigma_{(t)}^2 \V$, given $\Theta^{(t)}$ and  $\y$, where
\begin{eqnarray}
\m &=&  \V^{-1} \X^T\y, \quad \V = \big (\X^T \X +D_{\gg^{(t)}}^{-1} \big )^{-1},  \label{eq:mD} \\
D_{\gg^{(t)}} &=& \diag \Big ( d_{\gamma^{(t)}_j} \Big )_{j=1}^p = \diag \Big ( \gamma^{(t)}_j v_1 + (1 - \gamma^{(t)}_j) v_0 \Big )_{j=1}^p.\nonumber
\end{eqnarray}
Then the two expectation terms in \eqref{Q:ex} can be expressed as: 
\begin{eqnarray}
\EE_{\bb|\Theta^{(t)}, \y} \ \big \|\y-\X\bb \big \|^2
&=& \sigma_{(t)}^2  \text{tr}(\X \V \X^T) + \big \|\y-\X\m \big \|^2, \label{eq:residual}
\\
\EE_{\bb|\Theta^{(t)}, \y} \sum_{j=1}^p \frac{\beta_j^2}{d_{\gamma_j}} &=& \sum_{j=1}^p \frac{\sigma_{(t)}^2 V_{jj} + m_j^2 }{ (1-\gamma^{(t)}_{j})v_0+\gamma^{(t)}_{j} v_1}.
\label{eq:exp2}
\end{eqnarray}

\subsubsection*{M Step}
We sequentially update parameters $(\gg,\theta,\sigma)$ to maximize the objective function $Q$. 

\begin{enumerate}
\item {\bf Update $\gamma_j$'s. } The terms involving $\gamma_j$ in (\ref{Q:ex}) are
\begin{equation}
- \frac{1}{2 \sigma^{2}_{(t)}}  \EE_{\bb|\Theta^{(t)}, \y} \left [ \frac{\beta_j^2}{d_{\gamma_j}}  \right ] - \frac{1}{2} \log d_{\gamma_j} + \log \pi(\gamma_j \mid \theta^{(t)}).
\label{eq:Qgamma}
\end{equation}
Plug in $\gamma_j=0$ and $\gamma_j=1$ to (\ref{eq:Qgamma}) respectively, then we have 
\begin{equation}
\gamma_j^{(t+1)}  =  1, \quad\text{if}\quad  \EE_{\bb|\Theta^{(t)}, \y} \big [ \beta_j^2 \big ] > r^{(t)} ,
\label{eq:gam1}
\end{equation}
where \[ r^{(t)} = \frac{\sigma^{2}_{(t)}}{1/v_0 - 1/v_1} \Big  (\log{v_1\over v_0} -2 \log\frac{\theta^{(t)}}{1-\theta^{(t)}} \Big ). \]

\item {\bf Update $(\sigma^2, \theta).$} Given $\gg^{(t+1)}$, the updating equations for the other two parameters are given by
\begin{equation}
\sigma^{2}_{(t+1)} = \frac{ \EE_{\bb|\Theta^{(t)}, \y} \Big [  \|\y-\X\bb\|^2 + 
  \sum_{j=1}^p \beta_j^2 / d_{\gamma^{(t+1)}_j}  \Big ]+  \nu \lambda }{n+p+\nu}, \label{eq:sig1}
\end{equation}
\begin{equation}
\theta^{(t+1)} = \frac{\sum^p_{j=1}\gamma_j^{(t+1)} +a_0-1}{p+a_0+b_0-2}. \label{eq:the1}
\end{equation}
\end{enumerate}

\subsubsection*{Stopping Rule}

The EM algorithm alternates between the E-step and M-step until convergence. A natural stopping criterion is to check whether the change of the objective function $Q$ is small.  
To reduce the computation cost for evaluating the $Q$ function, we adopt a different stopping rule as our main focus is $\gg$: we stop our algorithm when the estimate $\gg^{(t)}$ stays the same for $k_0$ iterations. In practice, we suggest to set $k_0=3.$ The pseudo code of this EM algorithm is summarized in Algorithm \ref{arg:em}.

\RestyleAlgo{boxruled}
\begin{algorithm}
 \caption{EM Algorithm}
 \label{arg:em}
 \KwIn{$\X, \y, v_0, v_1,a_0,b_0,\nu,\lambda $}
 Initialize $\Theta^{(0)}$\;
 E-step: Calculate the two expectations in (\ref{eq:residual}) and (\ref{eq:exp2}), denoted as $EE^{(0)}$\;
 \For{t = 1 : maxIter}{
  M-step: Update $\Theta^{(t)}$ from Eq (\ref{eq:gam1},~\ref{eq:sig1},~\ref{eq:the1})\;
  E-step: Update $EE^{(t)}$ from Eq (\ref{eq:residual},~\ref{eq:exp2})\;
  \If{$\gg^{(t)}$ stays the same for $k_0=3$ iterations}{
   break\;
   }
}
 Return $\gg$, $\m$\;
\end{algorithm}

\subsection{Computation Cost}\label{sec:fast}
At each E-step, updating the posterior of $\bb$ given other parameters in (\ref{eq:mD}) requires inverting a $p\times p$ matrix 
\begin{equation} \label{eq:V}
\V_{(t)} = (\X^T \X +D_{\gg^{(t)}}^{-1})^{-1}, 
\end{equation} 
which is the major computational burden of this algorithm. 
When $p>n$, we can use the Sherman-Morrison-Woodbury formula to compute the inverse of an $n\times n$ matrix. So the computation cost at each iteration is of order $O(\min(n,p)^3)$. It is, however, still time-consuming when both $n$ and $p$ are large. 

Note that the only thing that changes in (\ref{eq:V}) from iteration to iteration is $D_{\gg^{(t)}}$,  a diagonal matrix depending on the binary vector $\gg^{(t)}$. From our experience, only a small fraction of $\gamma^{(t)}_j$'s are changed at each iteration after the first a couple of iterations. So the idea is to use the following recursive formula to compute $\V_{(t)}$: 
\begin{eqnarray}
\V_{(t)} &=& (\X^T \X + D^{-1}_{\gg^{(t-1)}} + D^{-1}_{\gg^{(t)}} - D^{-1}_{\gg^{(t-1)}} )^{-1} \nonumber \\
&=& (\V_{(t-1)}^{-1} + D^{-1}_{\gg^{(t)}} - D^{-1}_{\gg^{(t-1)}})^{-1} \label{eq:trick}
\end{eqnarray}
where $D^{-1}_{\gg^{(t)}} - D^{-1}_{\gg^{(t-1)}}$ is a diagonal matrix with the $j$-th diagonal entry being non-zero only if the inclusion/exclusion status, i.e., the value of $\gamma_j$, is changed from the last iteration. Let $l$ denote the number of variables whose  $\gamma_j$ values are changed from iteration $(t-1)$ to $t$. Then $D^{-1}_{\gg^{(t)}} - D^{-1}_{\gg^{(t-1)}}$ is a rank $l$ matrix. We can apply the Woodbury formula on (\ref{eq:trick}) to  reduce the computation complexity from $O(\min(n,p)^3)$ to $O(l^3)$.

For example, without loss of generality, suppose only the first $l$ covariates have their $\gamma_j$ values changed. Then, we can write 
$$D^{-1}_{\gg^{(t)}} - D^{-1}_{\gg^{(t-1)}}= U_{p\times l}A_{l\times l}U^T, $$
where $A = \big ( \frac{1}{v_0}-\frac{1}{v_1} \big ) \diag(2\gamma_j^{(t)}-1)_{j=1}^l$ and $U$ consists of the first $l$ columns from $\mathbf{I}_p.$ Applying the Woodbury formula, we have
$$
\V_{(t)} 
=\V_{(t-1)}-\V_{(t-1)} U (A^{-1}+U^T \V_{(t-1)} U)^{-1} U^T\V_{(t-1)}. 
$$


\section{Asympototic Consistency}\label{sec:asym}

In this section, we study the asymptotic property of $\hat{\gg}_n$, the MAP estimate of model index returned by our EM algorithm. Assume the data $\y_n$ are generated from a Gaussian regression model: 
$$\y_n \sim \textsf{N}_n \big ( \X_n \bb^\ast_n, \sigma^2 \mathbf{I}_n \big ).$$
Here we consider a triangular array set up: the dimension $p=p_n$ diverges with $n$ and the true coefficients $\bb^\ast_n$ also vary with $n$. 
Suppose the true model is indexed by $\gg_n^\ast$, where $\gamma^\ast_{nj}=1$ if $\beta^\ast_{nj} \ne 0$ and  $\gamma^\ast_{nj}=0$ if $\beta^\ast_{nj} = 0$. We show that  our EM algorithm has the following selection consistency property:
$$\mathbb{P}( \hat{\gg}_n = \gg_n^*) \to 1,\quad \text{as } n \to \infty. $$

First we list some regularity conditions needed in our proof. Let $\lambda_{\text{min}}(A)$ denote the smallest eigenvalue of matrix $A$. We assume
\begin{align*}
(A1) \quad & \lambda_{\min}(\X_n^T\X_n)^{-1}=O(n^{-\eta_1}),\ 0<\eta_1\le 1; \\
(A2) \quad & \|\bb_n^\ast\|_2 =O(n^{\eta_2}),\ 0<\eta_2<\eta_1; \\
(A3) \quad & \liminf_n \frac{\min \big \{ |\beta_{nj}^\ast|, \gamma_{nj}^\ast=1 \big \} }{n^{(\eta_3-1)/2}}\ge M,\ 0\le \eta_3<1; \\
(A4) \quad & a_0\sim p_n,\ b_0\sim p_n,\ \nu=\infty,\ \lambda=1,
\end{align*}
where $M$ is a positive constant, and $(a_0, b_0, \nu, \lambda)$ are the hyper-parameters from the Beta and InvGamma priors. 

Assumption (A1) controls the collinearity among covariates; in the traditional asymptotic setting where $p$ is fixed, we have $\eta_1=1.$ Assumption (A2) controls the sparsity (in terms of $L_2$ norm) of the true regression coefficient vector.  
Assumption (A3) requires that the minimal non-zero coefficient cannot go to zero at a rate faster than $1/\sqrt{n}$; in the traditional asymptotic setting where $\bb^\ast$ is fixed, we have $\eta_3 = 0.$ Assumption (A4) is purely technical, which ensures that $\hat{\theta}_n$ and $\hat{\sigma}_n^2$ are bounded. In fact we could fix $\hat{\theta}_n$ and $\hat{\sigma}_n^2$ to be any constant, which does not affect the proof. In our simulation studies, we still recommend \eqref{value:hyper:para} as the choice for hyper-parameters unless $p$ is large. 

\begin{thm}\label{thm:consistency}
Assume (A1-A4) and $p=O(n^\alpha)$ where $0\le \alpha <1$. With $v_1$ fixed and $v_0$ satisfying $$0<v_0=O(n^{-r_0}),  \quad 1-\eta_3 < r_0 < \min \Big \{\eta_1-\alpha, \frac{2}{3}(\eta_1-\eta_2) \Big \},$$
the model returned by our EM  algorithm, $\hat{\gamma}_n$,  achieves the following selection consistency,
\begin{equation}
\mathbb{P}( \hat{\gg}_n = \gg_n^*) \to 1,\quad \text{as } n \to \infty.
\end{equation}
\end{thm}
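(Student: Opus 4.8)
The plan is to prove the stronger statement that, with probability tending to one, a single M-step of Algorithm~\ref{arg:em} maps \emph{every} configuration $\gg\in\{0,1\}^{p_n}$ to the true model $\gg_n^\ast$, provided the nuisance parameters $(\sigma^2,\theta)$ fed into it lie in a fixed ``good'' range. Since one M-step always drives $(\sigma^2_{(t)},\theta^{(t)})$ into that range (see below), this forces $\gg^{(t)}=\gg_n^\ast$ for every $t\ge 2$, after which the stopping rule returns $\hat\gg_n=\gg_n^\ast$; in particular the local-mode behaviour noted earlier is a purely finite-sample phenomenon, and asymptotically the EM map has $\gg_n^\ast$ as its unique, globally attracting fixed point. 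By the update rule \eqref{eq:gam1}, the task reduces to showing that, uniformly over $j$ and over $\gg$, the E-step quantity $\EE_{\bb\mid\Theta^{(t)},\y}[\beta_j^2]=\sigma^2_{(t)}V_{jj}+m_j^2$ of \eqref{eq:exp2} exceeds the threshold $r^{(t)}$ exactly when $\gamma^\ast_{nj}=1$.

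First I would pin down the nuisances and the threshold. From \eqref{eq:sig1}--\eqref{eq:the1} and (A4): $\nu=\infty$ gives $\sigma^2_{(t)}\equiv\lambda=1$ for all $t\ge 1$, regardless of initialization; $a_0,b_0\sim p_n$ gives $\delta\le\theta^{(t)}\le 1-\delta$ for a fixed $\delta>0$ and all $t\ge 1$, regardless of $\gg^{(t)}$; hence $r^{(t)}\asymp v_0\log(1/v_0)\asymp n^{-r_0}\log n$ for all $t\ge 1$. Next come deterministic, $\gg$-uniform matrix bounds for $\V_\gg=(\X_n^T\X_n+D_\gg^{-1})^{-1}$: from $\V_\gg^{-1}\succeq\X_n^T\X_n$ and (A1), $\|\V_\gg\|\le\lambda_{\min}(\X_n^T\X_n)^{-1}=O(n^{-\eta_1})$, so $\max_jV_{jj}=O(n^{-\eta_1})$, $\max_{j,k}|V_{jk}|=O(n^{-\eta_1})$, $(\V_\gg^2)_{jj}=O(n^{-2\eta_1})$, and (since $\V_\gg\X_n^T\X_n\V_\gg\preceq\V_\gg$) $(\V_\gg\X_n^T\X_n\V_\gg)_{jj}\le V_{jj}=O(n^{-\eta_1})$; moreover $\|D_\gg^{-1}\|\le 1/v_0=O(n^{r_0})$ with $r_0<\eta_1$, so $D_\gg^{-1}$ is a lower-order perturbation of $\X_n^T\X_n$ and $\max_j(D_\gg^{-1})_{jj}V_{jj}\le\|D_\gg^{-1}\|\,\|\V_\gg\|=O(n^{r_0-\eta_1})=o(1)$.

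Then I would decompose $\m=\V_\gg\X_n^T\y$. Using $\V_\gg\X_n^T\X_n=I-\V_\gg D_\gg^{-1}$ one gets, for every $j$,
\[
m_j=\beta^\ast_{nj}\bigl(1-(D_\gg^{-1})_{jj}V_{jj}\bigr)-\sum_{k\neq j}V_{jk}(D_\gg^{-1})_{kk}\beta^\ast_{nk}+(\V_\gg\X_n^T\e)_j .
\]
By Cauchy--Schwarz in $k$, $\|\bb^\ast_n\|$ from (A2), and the bounds above, the cross term is $O\bigl((1/v_0)\sqrt{(\V_\gg^2)_{jj}}\,\|\bb^\ast_n\|\bigr)=O(n^{r_0+\eta_2-\eta_1})=o(\sqrt{r^{(t)}})$, using $r_0<\tfrac{2}{3}(\eta_1-\eta_2)$, and it is deterministic and uniform in $\gg$. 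For the noise term, each $(\V_\gg\X_n^T\e)_j$ is Gaussian with variance $\sigma^2(\V_\gg\X_n^T\X_n\V_\gg)_{jj}=O(n^{-\eta_1})$, and the crude bound $\max_j(\V_\gg\X_n^T\e)_j^2\le\|\V_\gg\X_n^T\e\|_2^2$ with $\EE\|\V_\gg\X_n^T\e\|_2^2=\sigma^2\,\text{tr}(\V_\gg\X_n^T\X_n\V_\gg)\le\sigma^2\,\text{tr}(\V_\gg)=O(p_nn^{-\eta_1})=O(n^{\alpha-\eta_1})$, together with Gaussian concentration of this quadratic form and a union bound over the $2^{p_n}$ choices of $\gg$, gives $\max_{\gg,j}(\V_\gg\X_n^T\e)_j^2=O_p(n^{\alpha-\eta_1})=o_p(r^{(t)})$ because $r_0<\eta_1-\alpha$. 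Separating cases: when $\gamma^\ast_{nj}=0$, $\beta^\ast_{nj}=0$ so $\EE[\beta_j^2]=\sigma^2_{(t)}V_{jj}+m_j^2=O(n^{-\eta_1})+o_p(r^{(t)})<r^{(t)}$ uniformly w.h.p., whence $\gamma_j^{(t+1)}=0$; when $\gamma^\ast_{nj}=1$, the shrinkage factor is $1-o(1)$, the remaining two terms are $o_p(\sqrt{r^{(t)}})$, and (A3) gives $|\beta^\ast_{nj}|\ge c\,n^{(\eta_3-1)/2}$, so since $r_0>1-\eta_3$ forces $r^{(t)}\asymp n^{-r_0}\log n\ll n^{\eta_3-1}$ we get $\EE[\beta_j^2]\ge m_j^2\ge c'(\beta^\ast_{nj})^2>r^{(t)}$ uniformly w.h.p., whence $\gamma_j^{(t+1)}=1$. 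Intersecting these two finite-union events yields $\gg^{(t+1)}=\gg_n^\ast$ for every $\gg$ with probability $\to 1$; by the first step this holds for all $t\ge 2$, so $\hat\gg_n=\gg_n^\ast$ with probability $\to 1$.

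I expect the main obstacle to be making these bounds genuinely uniform over all $2^{p_n}$ configurations $\gg$: the discrete iterate $\gg^{(t)}$ is itself data-dependent, so a union bound plus Gaussian concentration of $\|\V_\gg\X_n^T\e\|_2^2$ around its mean $O(n^{\alpha-\eta_1})$ seems unavoidable, and this is exactly where $p_n=O(n^\alpha)$ and $r_0<\eta_1-\alpha$ are consumed. The other delicate point is that everything hinges on $r_0<\eta_1$, which makes $D_\gg^{-1}$ negligible next to $\X_n^T\X_n$ so that simultaneously (i) the ridge shrinkage on a truly active coordinate vanishes asymptotically, giving $m_j\approx\beta^\ast_{nj}$, and (ii) $V_{jj}$ and the noise propagation stay at scale $n^{-\eta_1}$; balancing this scale against the bias scale $n^{r_0+\eta_2-\eta_1}$, the signal floor $n^{(\eta_3-1)/2}$ and the threshold scale $n^{-r_0/2}$ is what forces the admissible window $1-\eta_3<r_0<\min\{\eta_1-\alpha,\tfrac{2}{3}(\eta_1-\eta_2)\}$ in the statement.
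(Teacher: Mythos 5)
Your proposal is correct and rests on the same skeleton as the paper's proof: fix $\hat\sigma^2_n$ and $\hat\theta_n$ via (A4) so that the threshold is $r_n\asymp n^{-r_0}\log n$, decompose the posterior mean into signal, ridge-bias, and noise, bound $V_{jj}=O(n^{-\eta_1})$ and the bias by $O(n^{r_0+\eta_2-\eta_1})$ exactly as in (R1)--(R2), and then compare against the signal floor $n^{(\eta_3-1)/2}$ from (A3); each of the three constraints on $r_0$ is consumed at the same place in both arguments. The one genuinely different ingredient is how you control the noise term uniformly. The paper writes $\mathbf{W}_n=A\tilde{\mathbf{W}}_n$ with $A=(\X_n^T\X_n+D^{-1})^{-1}\X_n^T\X_n$, $\|A\|_2\le 1$, so that all randomness sits in the $\gg$-free Gaussian vector $\tilde{\mathbf{W}}_n=(\X_n^T\X_n)^{-1}\X_n^T\e_n$; a Bonferroni bound over the $p$ coordinates (paying the $\sqrt p$ from $\|A\|_\infty\le\sqrt p\,\|A\|_2$) then gives $\max_j|W_{nj}|\le\sqrt{r_n}$ with probability $1-O(e^{-Mn^{\eta_1-r_0-\alpha}})$, with no need to union over configurations. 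You instead keep the $\gg$-dependence, bound $\max_j(\V_\gg\X_n^T\e)_j^2$ by the squared norm with mean $O(p\,n^{-\eta_1})$, and invoke Hanson--Wright concentration plus a union bound over all $2^{p_n}$ values of $\gg$; this works (the exponent $cp$ from the quadratic-form tail beats $2^{p_n}=e^{p_n\log 2}$ once the deviation constant is taken large enough, and the rate condition $r_0<\eta_1-\alpha$ comes out identically), but it is more delicate than the paper's factorization, which sidesteps the $2^{p_n}$ union bound entirely. What your route buys in exchange is an explicitly uniform statement: the M-step maps \emph{every} configuration to $\gg_n^\ast$ on a single high-probability event, so $\gg_n^\ast$ is the unique attracting fixed point of the EM map. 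The paper's proof implicitly needs its bounds to hold at the data-dependent terminal configuration $\hat\gg_n$ (note its $D^{-1}$ is built from $\hat\gamma_{nj}$), and it handles this by making (R1)--(R3) deterministic bounds valid for every $D$; your version makes that uniformity explicit, which is a modest strengthening rather than a correction.
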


\begin{proof}
See Appendix. 
\end{proof}

\section{The BBEM Algorithm}\label{sec:bb}
A common issue with EM algorithms is that they could be trapped at a local maximum. There are some standard remedies available for dealing with this issue, for instance, trying a set of different initial values or utilizing some more advanced  optimization procedures at the M-step. Since our EM algorithm is searching for the optimal $\gg$ over a big discrete space, all $p$-dimensional binary vectors, these remedies are less useful when $p$ is large. 

When doing optimization with $\gg$, a discrete vector, the resulting solution is often not stable, i.e., has a large variance. Bagging is an easy but powerful method \citep{Breiman:1996} for variance reduction, which applies the same algorithm on multiple bootstrap copies of the data, and then aggregates the results. We proposed the following ensemble EM algorithm, in which we repeatedly run the EM variable selection algorithm, Algorithm \ref{arg:em} from Section \ref{sec:alg}, on Bayesian bootstrap replicates. 

The original bootstrap repeatedly draws samples  from the original data set $\{ (\x_i, y_i )\}_{i=1}^n$
 with replacement, i.e., each observation $(\x_i, y_i)$ is sampled with probability $1/n.$ In Bayesian bootstrap \citep{Rubin:1981}, instead of sampling a subset of the data, we assign a random weight $w_i$ to the $i$-th observation  and then fit a weighted least squares regression model on the whole data set. In particular, following \cite{Rubin:1981}, we generate the  weights $\w = (w_1, \dots, w_n)$ from a n-category Dirichlet distribution: 
\begin{equation} \label{eq:bsweight}
\w_{n\times 1}\sim \sf{Dir}(1,\cdots,1). 
\end{equation}
When applying Algorithm \ref{arg:em} on a weighted linear regression model, all the updating equations stay the same,  except equation  (\ref{eq:mD})  for the posterior of $\bb$, which should be  changed to:
\begin{equation}
\m = \V \X^T \diag(\w) \y,\quad
\V = (\X^T \diag(\w) \X +D_{\gg^{(t)}}^{-1})^{-1}. 
\label{eq:postbetabs}
\end{equation}
Eq (\ref{eq:residual}), the expectation of the weighted residual sum of squares, should also be changed accordingly: 
\begin{equation}
\EE_{\bb|\Theta^{(t)}, \y} \ \big  \|\y-\X\bb \big \|^2_\w = \sigma_{(t)}^2 \text{tr}(\diag(\w)\X \V \X^T) +  ( \y-\X\m )^T \diag(\w)  ( \y-\X\m ). \label{eq:exp1}
\end{equation}


It is well-known  that in order to make the aggregation work, we should control the correlation among estimates from bootstrap replicates. For example, in  random forest \citep{Breiman:2001},   the number of variables used for choosing the optimal split of a tree is restricted to a subset of the variables, instead of using all $p$ variables. A similar idea was implemented in Random Lasso \citep{Wang:2011}, an  ensemble algorithm for variable selection. In the same spirit, we apply the EM algorithm only on a subset of the variables at each Bayesian bootstrap iteration. A naive way is to randomly pick a subset from the $p$ variables. This, however, will be inefficient when $p$ is large and the true model is sparse, since it is likely most random subsets will not contain any relevant variables. So we employ a biased  sampling procedure:  sample the $p$ variables based on a weight vector $\tilde{\pi}$ that is defined as
\begin{equation}\label{eq:varweight}
\tilde{\pi}_{p\times1}\propto |\X^T \y |/\diag(\X^T \X),
\end{equation}
that is, variables are sampled based on their marginal effect in a simple linear regression. 

The ensemble EM algorithm operates as follows. First we sample a random set of $L$ variables according to the probability vector $\tilde{\pi}$, and draw a $n\times 1$ bootstrap weight vector $\w$ from (\ref{eq:bsweight}).  Let $\tilde{\X}$ be the new data matrix with the $L$ columns. Then apply the EM algorithm on  $\tilde{\X}$ with weight $\w$. 
 Let $\gg_k$ denote the model returned by the $k$-th Bayesian bootstrap iteration, where the $j$-th element of $\gg_k$ is $1$ if the $j$-th variable is selected and zero otherwise; of course, the $j$-th element is zero if the $j$-th variable is not included in the initial $L$ variables. Define the final variable selection frequency for the $p$ variables as
\begin{equation} \label{eq:phi:BBEM}
\boldsymbol{\phi}_{p \times 1} = \frac{1}{K} \sum_{k=1}^K \gg_k. 
\end{equation}
We can report the final variable selection result by thresholding $\phi_j$'s at some fixed number, for example, a half. Or we can produce a path-plot of $\boldsymbol{\phi}$ as $v_0$ varies, which could be a useful tool to investigate the importance of each variable. We illustrate this in our simulation study in Section \ref{sec:simu}. 


As for the computational cost, the inversion of the $L\times L$ matrix in \eqref{eq:postbetabs} is a big improvement compared with that of a $p\times p$ matrix, while it can be further simplified through the fast computing trick in Section \ref{sec:fast}. We call this algorithm, BBEM, which  is summarized in Algorithm \ref{arg:bbem}.

\RestyleAlgo{boxruled}
\begin{algorithm}
 \caption{BBEM Algorithm}
 \label{arg:bbem}
 \KwIn{$\X, \y, v_0, v_1,a_0,b_0,\nu,\lambda,K,L $}
 Compute the variable weight $\tilde{\pi}$ from (\ref{eq:varweight})\;
 \For{k = 1 : K}{
 Generate a subset of $L$ variables according to $\tilde{\pi}$\;
 Make the replicate $\tilde{\X}^k$ with the $L$ variables\;
 Initialize $\Theta^{(0)}_k$\;
 Generate bootstrap weight $\w$ from (\ref{eq:bsweight})\;
 E-step: Calculate the two expectations in (\ref{eq:exp2}), denoted as $EE^{(0)}_k$\;
 \For{t = 1 : maxIter}{
  M-step: Update $\Theta^{(t)}_k$ from Eq (\ref{eq:gam1},~\ref{eq:sig1},~\ref{eq:the1})\;
  E-step: Update $EE^{(t)}_k$ from Eq (\ref{eq:exp1},~\ref{eq:exp2})\;
  \If{$\gg^{(t)}_k$ stays the same for $k_0=3$ iterations}{
   break\;
   }
   }
   Record $\gg^{(t)}_k$, $\m^{(t)}_k$\;   
 }
 Return $\boldsymbol{\phi}$ from Eq (\ref{eq:phi:BBEM})\;
\end{algorithm}

\section{Empirical Study}\label{sec:simu}
In this section, we first compare the proposed EM algorithm (Algorithm \ref{arg:em}) with other popular methods on a widely used benchmark data set. Then we compare BBEM (Algorithm \ref{arg:bbem}) with other methods on two more challenging data sets of larger dimensions. Finally, we applied  BBEM  on a restaurant revenue data from a Kaggle competition, and showed that our algorithm outperforms the benchmark from random forest. 

For the hyper-parameters $v_0$ and $v_1$, we set $v_1=100$ as fixed and  tune  an appropriate value for $v_0$ either based on 5-fold cross-validation or BIC. 
For the initial value for $\theta$, we suggest to use $1/2$ for ordinary problems, but $\sqrt{n}/p$ for large-$p$ problems. The initial value of $\sigma^2$ is set as $1$. In addition, there are two bootstrap parameters: the total number of replicates $K$ and the number of variables used in each bootstrap $L$. For efficiency, the number of variables in each bootstrap replicate should not exceed the sample size $n$. We use $K=100$, and  $L=n/2=50$ if $p$ is large and $L=p$ is $p$ is small.

\subsection{A widely used benchmark} 
First we apply our EM algorithm on a widely used benchmark data set \citep{Tibshirani:1996}, which has $p=8$ variables, each from a standard normal distribution with pairwise correlation $\rho(\x_i,\x_j)=0.5^{|i-j|}$. The response variable is generated from  
$$\y = 3\x_1 + 1.5\x_2 +2\x_5 + \epsilon$$
where $\epsilon\sim \sfN(0,\sigma^2)$. 

Following \cite{Fan:2001}, we repeat the experiment 100 times under two scenarios: 
(1) $n=40, \sigma=3$ and (2) $n=60, \sigma=1$. The result is shown in Table \ref{table:benchmark}, which reports the average number of zero-coefficients (i.e., no selection) among signal  variables ($\x_1,\x_2,\x_5$) and  among noise variables, respectively. The results for SCAD1 (tuning parameter selected by cross-validation), SCAD2 (tuning parameter fixed) and LASSO are taken from \cite{Fan:2001}. In the first ``small sample-size high noise" scenario, our EM algorithm has the highest number of  zero-coefficients among noise variables, i.e., the lowest type I error. The average number of signal variables missed by EM is slightly higher than SCAD1 (where the tuning parameter is chosen by cross-validation) but less than SCAD2 (where the tuning parameter is pre-fixed). But overall, our EM algorithm and the two SCAD methods perform the best. 
In the second ``large sample-size low noise" scenario, no signal variables are missed by any method, but EM has the lowest type I error.
 
\begin{table}[h]
\centering
\begin{tabular}{p{25mm}|p{20mm}p{20mm}}
\hline
        Method   & $\x_j\in$ Noise\newline ($j$=3,4,6,7,8) & $\x_j\in$ Signal\newline ($j$=1,2,5) \\
           \hline
  $n=40$, $\sigma=3$ & &\\
  EM   &   4.55          &  0.24 \\         
SCAD1      &   4.20          &  0.21 \\
SCAD2      &   4.31          &  0.27 \\
LASSO      &   3.53          &  0.07 \\
Oracle     &   5.00          &  0.00 \\ 
\hline
$n=60$, $\sigma=1$ & &\\
EM     &   4.72          &  0.00 \\      
SCAD1      &   4.37          &  0.00 \\
SCAD2      &   4.42          &  0.00 \\
LASSO      &   3.56          &  0.00 \\
Oracle     &   5.00          &  0.00 \\ 
\hline
\end{tabular}
\caption{A widely used benchmark. The average number of zero-coefficients (i.e., no selection) out of 100 simulations for each types of variable (Signal or Noise) are shown.  The results other than EM  (Alg \ref{arg:em}) are  from \cite{Fan:2001}.}
\label{table:benchmark}
\end{table}

Following \cite{Wang:2011} and \cite{Zhu:2012}, we repeat the experiment 100 times with the same sample size $n=50$ but two different noise levels: low noise level ($\sigma=3$) and high noise level ($\sigma=6$). Table \ref{table:benchmark2} reports the minimum, median, maximum of being selected out of 100 simulations for the signal and the noise variables, respectively. Both Lasso and random Lasso have a higher chance of selecting the signal variables, but at the price of mistakenly including many noise variables. Overall, our EM algorithm performs the best, along with PGA and stability selection, two frequentist ensemble methods for variable selection. 

\begin{table}[h]
\centering
\begin{tabular}{l|rrr|rrr}
\hline
       Method   & \multicolumn{3}{c}{$\x_j\in$ Signal  ($j$=1,2,5)} & \multicolumn{3}{c}{$\x_j\in$ Noise ($j$=3,4,6,7,8) } \\
   &     Min &  Median &  Max &  Min & Median &   Max \\
\hline
  $n=50$, $\sigma=3$ & & & & & & \\         
  EM  &   91 & 97 & 100  & 3 & 6 & 12 \\
Lasso          &   99 & 100 & 100  & 48 & 55 & 61 \\
Random Lasso &   95 & 99 & 100   &  33 & 40 & 48 \\
ST2E      &   89 & 96 & 100   & 4 & 12 & 20 \\
PGA      &   82 & 98 & 100   &  4 & 7 & 11 \\
Stability selection & & & & & &\\
\quad$\lambda_{min}=1$    &   81 & 83 & 100  & 0 & 2 & 9 \\
\quad$\lambda_{min}=0.5$  &   90 & 98 & 100  & 4 & 8 & 22 \\
\hline
  $n=50$, $\sigma=6$ & & & & & &\\    
  EM  &   53 & 67 & 91  & 6 & 10 & 14 \\     
Lasso          &   76 & 85 & 99   & 47 & 49 & 53 \\
Random Lasso &   92 & 94 & 100   & 40 & 48 & 58 \\
ST2E      &   68 & 69 & 96   & 9 & 13 & 21 \\
PGA      &    54 & 76 & 94   & 9 & 14 & 16 \\
Stability selection & & & & & &\\
\quad$\lambda_{min}=1$  &   59 & 61 & 92   & 4 & 8 & 18 \\
\quad$\lambda_{min}=0.5$ &  76 & 84 & 100 & 30 & 42 & 50 \\
\hline
\end{tabular}
\caption{A widely used benchmark. The min, median, max number of being selected out of 100 simulations for each types of variable (Signal or Noise) are shown.  The results other than EM  (Alg \ref{arg:em}) are  from \cite{Zhu:2012}.}
\label{table:benchmark2}
\end{table}

\subsection{A highly-correlated data} 
Next we demonstrate our two algorithms on a highly-correlated example from \cite{Wang:2011}. The data has $p=40$ variables and the response $\y$ is generated from 
$$ \y = 3\x_1+3\x_2-2\x_3+3\x_4+3\x_5-2\x_6+\epsilon,$$
where $\epsilon\sim\sfN(0,\sigma^2)$ and $\sigma=6$. Each $\x_i$ is generated from a standard normal with the following correlation structure among the first six signal variables: the signal variables are divided into two groups, $V_1=\{\x_1,\x_2,\x_3\}$ and $\V_2=\{\x_4,\x_5,\x_6\}$; the within group correlation is $0.9$ and the between-group correlation is $0$. 

We repeat the simulation 100 times with $n=50$ and $n=100$, and the 
results are summarized  in Table \ref{table:corr}. 
For this example, due to the high correlation among features we expect ensemble methods to perform better. Indeed, BBEM has the best performance in terms of selecting true signal variables while controlling the error of including noise variables. The performance of the EM algorithm, although not the best, is also comparable with other top ensemble methods like random Lasso from \cite{Wang:2011}, and T2E and PGA from \cite{Zhu:2012}.

\begin{table}[h]
\centering
\begin{tabular}{l|rrr|rrr}
\hline
       Method   & \multicolumn{3}{c}{$\x_j\in$ Signal ($j$= 1:6)} & \multicolumn{3}{c}{$\x_j\in$ Noise  }\\
   &     Min &  Median &  Max &  Min & Median &   Max \\
\hline
  $n=50$, $\sigma=6$ & & & & & & \\         
Lasso          &   11 & 70 & 77  & 12 & 17 & 25 \\
Random Lasso &   84 & 96 & 97   &  11 & 21 & 30 \\
ST2E      &   85 & 96 & 100   & 18 & 25 & 34 \\
PGA      &   55 & 87 & 90   &  14 & 23 & 32 \\
EM         &   65 & 85.5 & 89  & 4 & 10 & 13 \\
BBEM &   89 & 96 & 100   & 4 & 8 & 15 \\
\hline
  $n=100$, $\sigma=6$ & & & & & &\\         
Lasso          &   8  & 84 & 88 & 12 & 22 & 31 \\
Random Lasso &   89  & 99  & 99  & 8  & 14  & 21 \\
ST2E      &   93  & 100  & 100  & 14  & 21  & 27 \\
PGA      &    40  & 85  & 92  & 13  & 22  & 33 \\
EM          &   84 & 91 & 95  & 1 & 7 & 16 \\
BBEM  &   95 & 99 & 100   & 4 & 9 & 14 \\

\hline
\end{tabular}
\caption{A highly-correlated data. The min, median, max number of times being selected (i.e., no selection) out of 100 simulations for each type of variables (Signal and Noise) are shown. The results other than EM and BBEM are  from \cite{Zhu:2012}.}
\label{table:corr}
\end{table}

For illustration purpose, we apply BBEM on a data set with $n=50$ and $v_0$ varying from $10^{-4}$ to $1$. Figure~\ref{fig:tuningcorrelated} shows the path-plot of the selection frequency from BBEM. There is clearly a gap between the signal variables and the noise ones. For a range of $v_0$,  from $0.001$ to $0.02$, BBEM can successfully select the six true variables $\{\x_1,\x_2,\ldots,\x_6\}$ if we threshold the selection frequency $\phi_j$ at $0.5$. 

\begin{figure}[ht] 
 \centering 
 \scalebox{0.65} 
 {\includegraphics{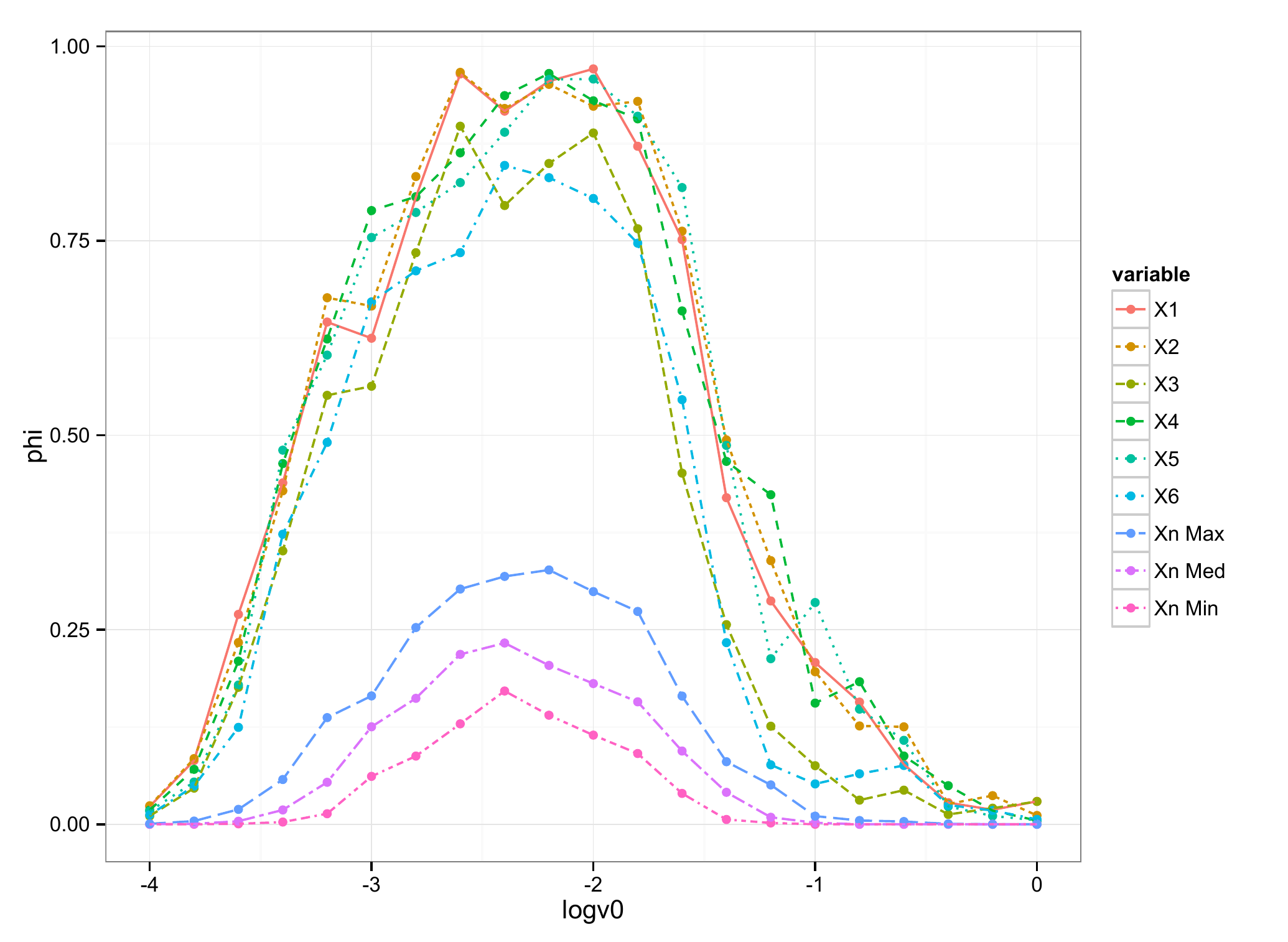}}
 \caption{Highly-correlated data $n=50$. A path-plot of the average selection frequency when  $v_0$ varies in the logarithm scale of base 10. Top 6 lines represent the true variables $\x_{1:6}$ and the bottom 3 lines represent the maximum, median and minimum among the noise variables $\x_{7:40}$.}
 \label{fig:tuningcorrelated}
\end{figure} 

\subsection{A Large-$p$ small-$n$ example}
Finally we apply BBEM  on a large-$p$ small-$n$ example from \cite{George:2014}, where $p=1000$ and $n=100$. Each of the $p$ features  is generated from a standard normal with pairwise correlation to be $0.6^{|i-j|}$ and the response $\y$ is generated from the following linear model:
$$\y = \x_{1} + 2\x_2 +3\x_3 + \epsilon, $$
where $\epsilon \sim \sfN(0, 3).$

For this large $p$ example, we set the parameters in the BBEM algorithm as follows: the initial value of $\theta$ is $\sqrt{n}/p$, the number of variables used in each bootstrap iteration $L=n/2=50$ and the total number of replicates $K=100$. It is well known that cross-validation based on prediction accuracy tends to include more noise variables. So, for this example where the true model is known to be sparse, we choose to tune $v_0$ via BIC. For illustration purpose, we also include BBEM with a fixed  tuning parameter $v_0=0.03$ in the comparison group.  We compare BBEM with the EMVS algorithm from \cite{George:2014}, which is implemented by us using the annealing technique for $\bb$'s initialization, and fixed $v_0=0.5, v_1=1000$ as suggested in \cite{George:2014}. 
 
Table ~\ref{table1} reports the average number of signal and noise variables being selected over 100 iterations for each method. BBEM with BIC tuning performs the best: it selects $2.99$ signal variables out of $3$ on average (i.e.,  only miss one variable, the weakest signal $\x_1$, once in all $100$ iterations) and meanwhile has the smallest type I error. The BBEM algorithm with a fixed tuning parameter has a similar result as EMVS but is much faster.  The computation advantage for BBEM comes from two aspects: the computation trick that reduces the computation cost on matrix inversion and the sub-sampling step in Bayesian bootstrap which allows us to deal with just a subset of variables of size smaller than $p$. 
\begin{table}[h]
\centering
\begin{tabular}{l|rrr}
\hline
      & $x_j \in$ Signal & $x_j \in$ Noise \\
\hline
BBEM (BIC)  &   2.99       &      0.24        \\
BBEM  ($v_0 = 0.03$)&   2.96       &      0.27      \\
EMVS  &   2.97       &      0.29       \\
Oracle & 3 & 0 \\
\hline
\end{tabular}
\label{table1}
\caption{A large-$p$ small-$n$ example. The table shows the average number of signal and noise variables being selected out of 100 iterations. In BBEM, $v_0$  is either chosen by BIC or fixed at $0.03$. EMVS is the algorithm proposed by \cite{George:2014}. }
\end{table}

\subsection{A real example}

For TFI, a company that owns some of the world's most well-known brands like Burger King and Arby's, decisions on where to open new restaurants are crucial. It usually takes a big investment of both time and capital at the beginning to set up a new restaurant. If a wrong location is chosen, likely the restaurant will soon be closed and all the initial investment will be lost. TFI hosted a  prediction competition on Kaggle\footnote{\url{https://www.kaggle.com/c/restaurant-revenue-prediction}}, where the goal is to build a mathematical model to predict the revenue of a restaurant based on a set of 
demographic, real estate, and commercial information. The data contains 137 restaurants in the training set and 1000 restaurants in the test set. Features include the Open Date, City, City Group, Restaurant Type, and three categories of obfuscated data (P1-P37, numeric): demographic data, real estate data, and commercial data. The response  is the transformed restaurant revenue in a given year. 

We first transform the ``Open Date" to a numeric feature called ``Year Since 1900" and merge the ``City" column into the ``City Group" column which now contains four categories: Istanbul, Izmir, Ankara, and others (small cities). 
Then we crate dummy variables for the categorical features like ``City Group"  and ``Restaurant Type" and keep all the obfuscated numeric columns P1-P37. The final training set has 43 features and 137 samples. 

After standardizing the data,
we fix $v_1$ at 100 and tune $v_0$ from $10^{-4.5}$ to $10^{-0.5}$ for the BBEM algorithm, where each bootstrap sample uses $L=15$ variables, and the total number of replicates is $K=300$. The path-plot of selection frequency for important features is shown in Figure~\ref{fig:tuningRestaurant}.  It is not surprising that ``City Group", ``Years Since 1900" and ``Restaurant Type" are  important predictors for the revenue. Quite a few obfuscated features are also selected as important predictors. Although we do not know their meanings, they should provide valuable information for TFI to choose  their next restaurant's location. 
\begin{figure}[ht]
 \centering 
 \scalebox{0.55} 
 {\includegraphics{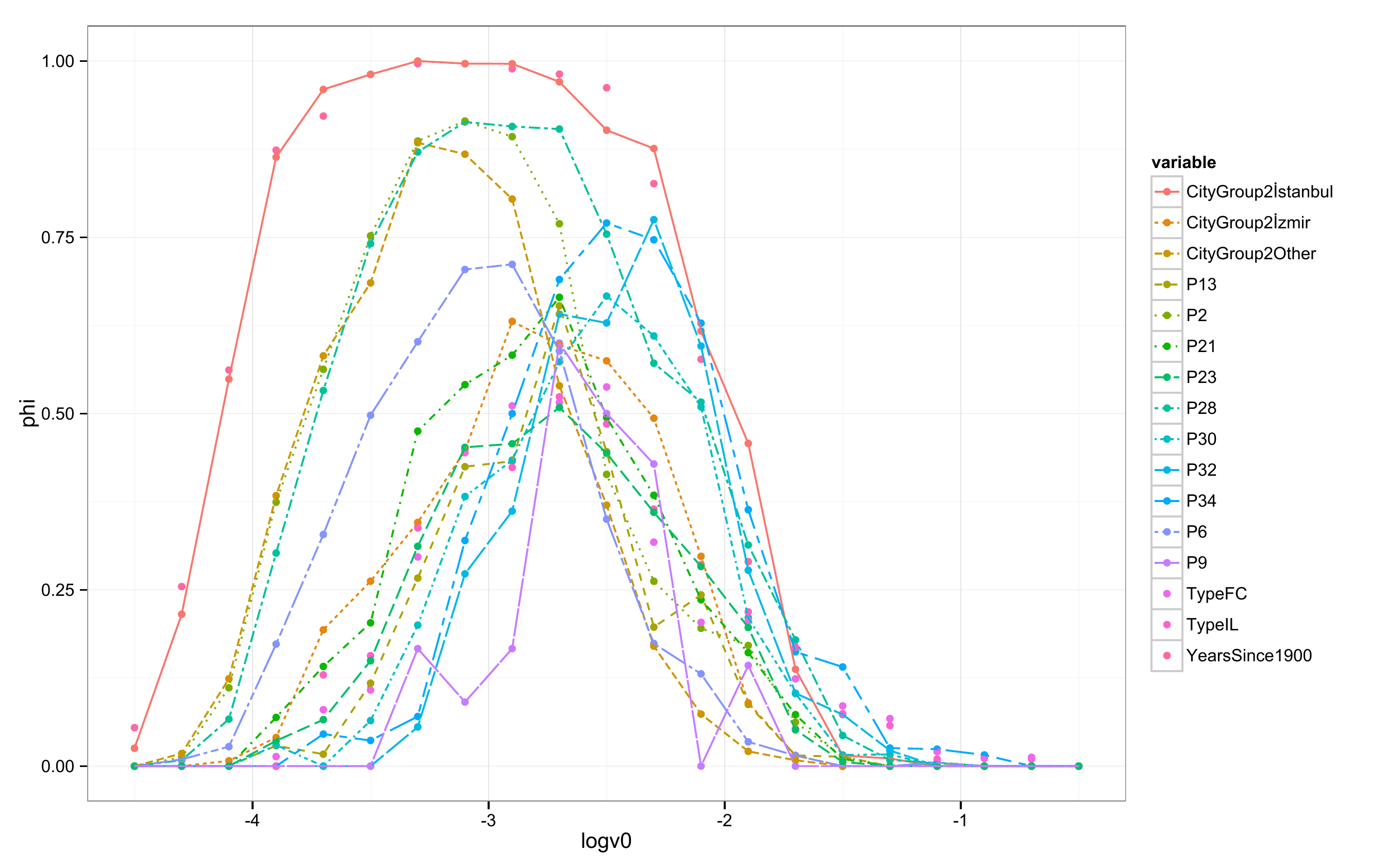}}
 \label{fig:tuningRestaurant}
 \caption{Restaurant data. The path plot of selection frequency when $v_0$ varies in the logarithm scale of base 10. Only a subset of variables with high selection frequencies are displayed. }
 \end{figure}

 Since the evaluation metric for this specific  competition is based on the rooted mean square error (RMSE), we use the same metric in our 5-fold cross-validation. 
We tuned $v_0$ from the set $\{0.0001, 0.0002, 0.0005, 0.001, 0.002, 0.005, 0.01\}$, and found $v_0=0.002$ has the smallest RMSE score. Then we fix $v_0$ at $0.002$, and re-run BBEM  on the whole training data. Let $\m$ denote the averaged posterior mean of $\bb$ from $L$ bootstrap iterations, and $\gg$ the averaged selection frequency for $p$ variables. We then use $\m*\gg$ (where $*$ denotes element-wise product) for prediction in the same spirit as the Bayesian model averaging. Our final Kaggle score is 1989762.52, which outperforms the random forest benchmark (RMSE=1998014.94) provided by Kaggle\footnote{At Kaggle, each team can submit their prediction and see the corresponding performance on the test data many times, so one can easily obtain a good score by keep tweaking the model to overfit the test data. For this reason, we did not compare our result with those ``low" scores on the leaderboard provided by individual teams.}. 
It is impressive for BBEM to outperform random forest considering that BBEM does not use any nonlinear features but random forest does.

\section{Further Discussion}
Variable selection is an important problem in modern statistics. In this paper, we study the Bayesian approach to variable selection in the context of multiple linear regression. We proposed an EM algorithm that returns the MAP estimate of the set of relevant variables. The algorithm can be operated very efficiently and therefore can scale up with big data. In addition, we have shown that the MAP estimate from our algorithm provides a consistent estimator of the true variable set even when the model dimension  diverges with the sample size. Further, we propose an ensemble version of our EM algorithm based on Bayesian bootstrap, which, as demonstrated via real and simulated examples, can substantially increase accuracy while maintaining the computation efficiency. 

Although we restrict our discussion for the linear model, the two algorithm we proposed can be easily extended to other generalized linear models by using latent variables \citep{Polson13PG}, an interesting topic for future research.

\appendix

\section*{Appendix: Proof of theorem \ref{thm:consistency}}\label{app:proof}
\begin{proof}
Recall the EM algorithm returns
\begin{equation*}
\hat{\gamma}_{nj} =  1, \quad\text{if}\quad \EE_{\bb|\Theta^{(t)}, \y} \big [ \beta_j^2 \big ] > r_n,
\end{equation*}
where the threshold
\[ r_n = \frac{\hat{\sigma}_n^2}{1/v_0 - 1/v_1} \Big  (\log{v_1\over v_0} -2 \log\frac{\hat{\theta}_n}{1-\hat{\theta}_n} \Big )=O(n^{-r_0}\log n) \]
and the conditional second moment of $\beta_j$ is equal to $m_j^2+\hat{\sigma}^2_n V_{jj}$ with 
 \begin{eqnarray*}
 \m &=& (\X_n^T \X_n  + D^{-1})^{-1} \X_n^T ( \X_n\bb_n^\ast+\e_n)\\
 &=& \bb^\ast- (\X_n^T \X_n  + D^{-1} )^{-1} D^{-1} \bb_n^\ast + (\X_n^T \X_n  + D^{-1})^{-1} \X_n^T\e_n\\
 &=&\bb^\ast-\b_n+\mathbf{W}_n\\
 \V &=& (\X_n^T \X_n +D^{-1})^{-1}, \quad D^{-1} =
\diag\left(\frac{1-\hat{\gamma}_{nj}}{v_0}+\frac{\hat{\gamma}_{nj}}{v_1}\right). 
 \end{eqnarray*}
 Here we represent the posterior mean of $\bb$ as three separate terms: the true coefficient vector $\bb^\ast_n$, the bias term $\b_n$ and the random error term $W_n$. So the event $\{ \hat{\gg}_n = \gg_n^* \} $ is equivalent to
\begin{equation}\label{th1:twosides}
\left\{ \min_{j: \gamma_{nj}^* = 1} m_j^2 + \hat{\sigma}^2_n V_{jj} > r_n \right \} \cap \left \{  \max_{j: \gamma_{nj}^* =0}  m_j^2 + \hat{\sigma}^2_n V_{jj} < r_n \right \}. 
\end{equation} 

First we prove the following results that quantify $m_j^2$ and $V_{jj}$. 
\begin{itemize}
\item[(R1)] $V_{jj}$ is upper bounded by the largest eigenvalue of $\V$, 
\begin{equation}
V_{jj} \le \frac{1}{\lambda_{n1}+1/v_1} = O(n^{-\eta_1})\prec O(n^{-r_0}\log n)=r_n,
\end{equation}
where for two sequences $\{a_n\}$ and $\{b_n\}$, we write $a_n \prec b_n$  if $a_n/b_n \to 0.$

\item[(R2)] The bias term $\b_n$ is bounded by
\begin{eqnarray}
\max_j |b_{nj}|\le \| \b_n\|_2 &\le & \|(\X_n^T \X_n  + D^{-1})^{-1}\|_2 \cdot \| D^{-1} \bb_n^\ast\|_2 \nonumber\\
&\le & \frac{1/v_0}{\lambda_{n1}+1/v_1}\|\bb_n^\ast\|_2 = O(n^{r_0-\eta_1+\eta_2}). \label{bound:bias}
\end{eqnarray}
When $r_0 <2(\eta_1-\eta_2)/3$, $\max_{j}  | b_{nj}|^2  \prec O(n^{-r_0}\log n)=r_n$.

The matrix $L_2$ norm is defined as $\| A\|_2 = \sup_{\|v\|=1} \|Av\|_2$, which is equal to its largest eigenvalue (singular value) when $A$ is symmetric (non-symmetric).  

\item[(R3)] Note that $\mathbf{W}_n$ is not a Gaussian random vector due to the dependence between $D$ and $\e_n$, but it can be rewritten as
\begin{eqnarray*}
\mathbf{W}_n &=& (\X_n^T \X_n  + D^{-1})^{-1} (\X_n^T\X_n)(\X_n^T \X_n)^{-1}  \X_n^T\e_n= A \tilde{\mathbf{W}}_n.
\end{eqnarray*}
where $A = \left (\X_n^T \X_n + D^{-1} \right )^{-1} \left (\X_n^T \X_n \right)$ and $\tilde{W}_n=(\X_n^T \X_n)^{-1}  \X_n^T\e_n$. Since $A$ is a matrix with norm bounded by $1$, we have
$$
\max_j | W_{nj}|  \le  \| A \|_{\infty}  \max_j | \tilde{W}_{nj}| \le \sqrt{p} \|A \|_2\max_j | \tilde{W}_{nj}| \le  \sqrt{p} \max_j | \tilde{W}_{nj}| .
$$

\item[(R4)]  $\tilde{\mathbf{W}}_n=(\X_n^T \X_n)^{-1}  \X_n^T\e_n$ is a Gaussian random vector with covariance $\sigma^2(\X_n^T \X_n)^{-1}$ and mean $\mathbf{0}$. So the variance for $W_{nj}$ is upper bounded by $ \sigma^2 \lambda_{n1}^{-1}$. 

Recall the tail bound for Gaussian variables: for any $Z \sim \textsf{N}(0, \tau^2)$,
 \[ \mathbb{P} ( | Z |  > t) = \mathbb{P} ( |Z|/ \tau > t/\tau) \le  \frac{\tau}{t} e^{- \frac{t^2}{2\tau^2} }. \]

With Result (R3) and Bonferroni's inequality, we can find a constant $M>0$ such that
\begin{eqnarray*}
\mathbb{P}  ( \max_{j}| W_{nj} | >  \sqrt{r_n} ) & \le &  \mathbb{P}  ( \max_{j}| \tilde{W}_{nj} | >  \sqrt{r_n/p}  ) \nonumber \\
 & \le &  p\cdot \mathbb{P}  (| \tilde{W}_{nj} | >  \sqrt{r_n/p}  ) \nonumber \\
&\le & \frac{p\sqrt{p}\sigma}{\sqrt{r_n\lambda_{n1}}}e^{-\frac{r_n\lambda_{n1}}{2p\sigma^2}}=O\big (e^{- Mn^{\eta_1-r_0-\alpha}} \big ), \label{proof:asym:tailprob}
\end{eqnarray*}
which goes to $0$ when $r_0 < \eta_1-\alpha$. So with probability going to $1$, $\max_{j}| W_{nj} | $ is upper bounded by $\sqrt{r_n}.$

\item[(R5)] When $1-\eta_3 < r_0$, $\min_{j: \gamma_{nj}^\ast =1} |\beta^{\ast}_{nj}|^2\sim n^{\eta_3-1}\succ O(n^{-r_0}\log n)=r_n$.
\end{itemize}

Now we prove (\ref{th1:twosides}). Given $1-\eta_3 < r_0 < \min\{\eta_1-\alpha, 2(\eta_1-\eta_2)/3\}$, we have
\begin{eqnarray*}
\mathbb{P}\left(\max_{j: \gamma_{nj}^\ast =0}  (m_j^2 + \hat{\sigma}_n^2 V_{jj} ) >r_n\right)
&\le & \mathbb{P}\left(\big( \max_{j}  | b_{nj}| + \max_{j}  |W_{nj}| \big)^2 + \hat{\sigma}_n^2 \max_j V_{jj}>r_n \right)\\
&\le & \mathbb{P}\left(\max_{j}  |W_{nj}| >\sqrt{r_n}\right)= O\big (e^{- Mn^{\eta_1-r_0-\alpha}} \big ), \\
\mathbb{P}\left(\min_{j: \gamma_{nj}^\ast =1}  (m_j^2 + \hat{\sigma}_n^2 V_{jj}  ) <r_n\right)
&\le & \mathbb{P}\left(\min_{j: \gamma_{nj}^\ast =1}|\beta^{\ast}_{nj}|^2- \big(\max_{j}  | b_{nj}|+ \max_{j}  |W_{nj}|  \big )^2<r_n \right)\\
&\le & \mathbb{P}\left(\max_{j}  |W_{nj}| >\sqrt{r_n}\right)= O\big (e^{- Mn^{\eta_1-r_0-\alpha}} \big ). 
\end{eqnarray*}
So (\ref{th1:twosides}) holds with probability $1-O(e^{- Mn^{\eta_1-r_0-\alpha}} )\to 1$. 
\end{proof}

\bibliographystyle{asa}
\bibliography{bib/BBEM}


\end{document}